\documentclass[journal,comsoc,twocolumn,letterpaper]{IEEEtran}

\usepackage[utf8]{inputenc}
\usepackage{amsmath,amssymb,amsfonts}
\usepackage{amsthm}
\usepackage{graphicx}
\usepackage{textcomp}
\usepackage{url}
\usepackage{cite}
\usepackage[hidelinks]{hyperref}   

\newtheorem{theorem}{Theorem}
\newtheorem{lemma}{Lemma}
\newtheorem{proposition}{Proposition}

\newtheorem{definition}{Definition}
\newtheorem{remark}{Remark}

\newcommand{\Fq}{\mathbb{F}_q}
\newcommand{\Fqq}{\mathbb{F}_{q^2}}
\newcommand{\wt}{\mathrm{wt}}
\newcommand{\kopt}{k_{\mathrm{opt}}^{(q)}}
\newcommand{\nopt}{n_{\mathrm{opt}}^{(q)}}
\newcommand{\dopt}{d_{\mathrm{opt}}^{(q)}}

\begin{document}

\title{Alphabet-Dependent Bounds for Pure Quantum\\ $(r,\rho)$-Locally Recoverable Codes}

\author{Vijay~Kumar~and~Ramakrishna~Bandi%
\thanks{Manuscript received XXXX XX, 20XX; revised XXXX XX, 20XX. The associate editor coordinating the review of this letter and approving it for publication was X.~X.~Xxxxx.}%
\thanks{The authors are with the Department of Science and Mathematics, International Institute of Information Technology, Naya Raipur 493661, India (e-mail: vijayk@iiitnr.edu.in; ramakrishna@iiitnr.edu.in).}%
\thanks{Digital Object Identifier 10.1109/LCOMM.20XX.XXXXXXX}}

\maketitle

\begin{abstract}
A quantum $(r,\rho)$-locally recoverable code ($(r,\rho)$-qLRC) is a quantum
code in which every qudit can be recovered from at most $r+\rho-1$ other
qudits, even after $\rho-1$ additional erasures inside the recovery set. The
bounds currently known for this class, namely the Singleton-like and the
GG Singleton-like bounds, are alphabet independent and are therefore loose for
small-to-moderate qudit dimensions. In this letter, we derive three
alphabet-dependent upper bounds for pure $(r,\rho)$-qLRCs obtained through the
Hermitian CSS construction: a Griesmer-like, a Plotkin-like, and a
sphere-packing-like bound. We further establish the asymptotic hierarchy among
these bounds and identify the relative-distance regions in which each of them
yields the tightest rate constraint.
\end{abstract}

\begin{IEEEkeywords}
Quantum locally recoverable codes, alphabet-dependent bounds, Griesmer bound,
Plotkin bound, sphere-packing bound, CSS codes, rate--distance trade-off.
\end{IEEEkeywords}

\IEEEpeerreviewmaketitle

\section{Introduction}
\IEEEPARstart{L}{ocally} recoverable codes (LRCs) allow an erased symbol to be
reconstructed from a small subset of the remaining symbols rather than from the
entire codeword, and are foundational to distributed and cloud storage
\cite{gopalan2012locality}. The $(r,\rho)$ generalization, in which each
recovery set tolerates $\rho-1$ erasures while using at most $r+\rho-1$
symbols, was introduced in \cite{prakash2012optimal}. For a $q$-ary
$(r,\rho)$-LRC with parameters $[n,k,d]$, the minimum distance satisfies the
Singleton-like bound
\begin{equation}\label{eq:cSingleton}
d \leq n-k+1-\left(\left\lceil\frac{k}{r}\right\rceil-1\right)(\rho-1),
\end{equation}
which depends only on $n$, $k$, $r$ and $\rho$, and hence does not capture
finite-alphabet effects. Grezet \textit{et al.} \cite{grezet2019alphabet}
tightened \eqref{eq:cSingleton} by deriving an alphabet-dependent bound of the
form
\begin{equation}\label{eq:CMbound}
k \leq \min_{\tau\in\mathbb{Z}_{+}}
\left\{\tau r+\kopt\bigl(n-\tau(r+\rho-1),\,d\bigr)\right\},
\end{equation}
where $\kopt(n,d)$ denotes the largest dimension of a $q$-ary linear code with
the remaining two parameters fixed. Gao \textit{et al.}
\cite{gao2025griesmer}, generalizing the LRC bounds of \cite{hao2020bounds},
combined Cadambe--Mazumdar-type \cite{cadambe2015bounds} reductions with the
classical Griesmer, sphere-packing and Plotkin bounds
\cite{huffman2010fundamentals} to obtain explicit alphabet-dependent bounds for
$(r,\rho)$-LRCs.

The quantum counterpart was first studied by Golowich and Guruswami
\cite{Golowic}, who introduced quantum LRCs (qLRCs) with locality $r$,
established a Singleton-like bound, and gave near-optimal constructions based on
folded quantum Tamo--Barg codes. Galindo \textit{et al.}
\cite{galindo2026quantum} extended this framework to $(r,\rho)$-qLRCs capable of
correcting $\rho-1$ qudit erasures per recovery set, characterized stabilizer
and CSS constructions, and established a Singleton-like bound together with
optimal pure examples. Several subsequent works have constructed optimal
$(r,\rho)$-qLRCs
\cite{cao2025optimal,zhou2025optimal,galindo2023optimal,guruswami2026quantum,cao2026quantum}.
In particular, \cite[Th.~31]{galindo2026quantum} establishes the
\emph{Singleton-like} bound for pure $(r,\rho)$-qLRCs,
\begin{equation}\label{eq:qSingleton}
2\delta \leq n-\kappa-2\left(\left\lceil\frac{n+\kappa}{2r}\right\rceil-1\right)
(\rho-1)+2,
\end{equation}
whereas \cite[Th.~35]{Golowic} establishes a general Singleton-like bound for
arbitrary qLRCs, which extends to $(r,\rho)$-qLRCs and is referred to here as
the \emph{GG Singleton-like} bound:
\begin{equation}\label{eq:GGSingleton}
\begin{aligned}
\kappa \leq {}& n-2(\delta-1)
-\left\lfloor\frac{n-(\delta-1)}{r+\rho-1}\right\rfloor\\[1mm]
&-\left\lfloor
\frac{n-2(\delta-1)-\left\lfloor\frac{n-(\delta-1)}{r+\rho-1}\right\rfloor}
{r+\rho-1}\right\rfloor .
\end{aligned}
\end{equation}
Like their classical counterparts, neither \eqref{eq:qSingleton} nor
\eqref{eq:GGSingleton} explicitly incorporates the qudit dimension $q$. In this
letter we address this gap by deriving quantum \emph{Griesmer-like},
\emph{sphere-packing-like} and \emph{Plotkin-like} bounds for pure
$(r,\rho)$-qLRCs via the Hermitian CSS construction, and by characterizing their
asymptotic behaviour and relative tightness with respect to the
\emph{Singleton-like} and \emph{GG Singleton-like} bounds.

The rest of the letter is organized as follows. Section~\ref{sec:prelim} recalls
the required notation, definitions and known results. Section~\ref{sec:adb}
presents the three alphabet-dependent bounds for $(r,\rho)$-qLRCs and analyzes
their asymptotic form. Section~\ref{sec:conclusion} concludes the letter.

\section{Preliminaries}\label{sec:prelim}

Throughout this letter, the locality parameters are written as $(r,\rho)$
rather than as $(r,\delta)$ as in \cite{galindo2026quantum}, so as to
distinguish the locality parameter from the minimum distance $\delta$ of a
quantum error-correcting code (QECC). For a subset $S\subseteq\Fq^{\,n}$ we
write $\wt(S)=\min_{\mathbf{u}\in S,\ \mathbf{u}\neq\mathbf{0}}\wt(\mathbf{u})$,
and for a linear code $\mathcal{C}$ we let $\wt(\mathcal{C})$ denote the minimum
weight of a nonzero codeword.

\subsection{Classical \texorpdfstring{$(r,\rho)$}{(r,rho)}-LRCs}

An $[n,k,d]_q$ linear code is a $k$-dimensional subspace of $\Fq^{\,n}$ with
block length $n$, dimension $k$ and minimum Hamming distance
$d=\wt(\mathcal{C})$. For a code $\mathcal{C}$ over $\Fq$, its Euclidean dual is
$\mathcal{C}^{\perp_E}=\{\mathbf{x}\in\Fq^{\,n}:\langle\mathbf{x},\mathbf{y}
\rangle=\sum_i x_iy_i=0,\ \forall\,\mathbf{y}\in\mathcal{C}\}$; for a code over
$\Fqq$, its Hermitian dual is
$\mathcal{C}^{\perp_H}=\{\mathbf{x}\in\Fqq^{\,n}:\langle\mathbf{x},\mathbf{y}
\rangle_H=\sum_i x_iy_i^{q}=0,\ \forall\,\mathbf{y}\in\mathcal{C}\}$. The code
$\mathcal{C}$ is Hermitian (resp.\ Euclidean) \emph{dual containing} if
$\mathcal{C}^{\perp_H}\subseteq\mathcal{C}$ (resp.\
$\mathcal{C}^{\perp_E}\subseteq\mathcal{C}$).

\begin{definition}
For $i\in\{1,\dots,n\}$, the coordinate $i$ of $\mathcal{C}$ has
\emph{$(r,\rho)$-locality} if it lies in a recovery set
$S_i\subseteq\{1,\dots,n\}$ with $|S_i|\leq r+\rho-1$ and
$d(\mathcal{C}|_{S_i})\geq\rho$, where $\mathcal{C}|_{S_i}$ denotes
$\mathcal{C}$ punctured to $S_i$. The code $\mathcal{C}$ is an
$(r,\rho)$-LRC if this holds for every $i$.
\end{definition}

Gao \textit{et al.} \cite{gao2025griesmer} established the following
alphabet-dependent bounds for $(r,\rho)$-LRCs by generalizing the LRC bounds of
\cite[Prop.~1]{hao2020bounds}.

\begin{lemma}[{\cite[Th.~2]{gao2025griesmer}}]\label{lem:cADB}
Let $1\leq r<k$ and $\rho\geq2$, and let $\mathcal{C}$ be a $q$-ary
$(r,\rho)$-LRC with parameters $[n,k,d]$. Put
$\tau_{\max}=\lceil k/r\rceil-1$. Then
\begin{align}
n &\geq \max_{0\leq\tau\leq\tau_{\max}}
\left\{\tau(r+\rho-1)+\nopt(k-\tau r,\,d)\right\},\label{eq:ADgriesmer}\\
k &\leq \min_{0\leq\tau\leq\tau_{\max}}
\left\{\tau r+\kopt\bigl(n-\tau(r+\rho-1),\,d\bigr)\right\},\label{eq:ADsphere}\\
d &\leq \min_{0\leq\tau\leq\tau_{\max}}
\left\{\dopt\bigl(n-\tau(r+\rho-1),\,k-\tau r\bigr)\right\},\label{eq:ADplotkin}
\end{align}
where $\nopt(k,d)$, $\kopt(n,d)$ and $\dopt(n,k)$ denote, respectively, the
smallest length, the largest dimension and the largest minimum distance of a
$q$-ary linear code with the two remaining parameters fixed.
\end{lemma}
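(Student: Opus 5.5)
The plan is the Cadambe--Mazumdar-type shortening argument: strip off $\tau$ recovery sets and obtain a smaller linear code that keeps the minimum distance. The key structural claim is this. For every $\tau$ with $0\le\tau\le\tau_{\max}$ there is a set $T\subseteq\{1,\dots,n\}$ with $|T|\le\tau(r+\rho-1)$ and $\dim\mathcal{C}|_T\le\tau r$. Shortening $\mathcal{C}$ on $T$ (keeping codewords vanishing on $T$, then deleting those coordinates) gives a linear code $\mathcal{C}'$ with the following parameters:
\begin{itemize}
\item length $n'=n-|T|$;
\item dimension $k'\ge k-\dim\mathcal{C}|_T\ge k-\tau r>0$, which is positive because $\tau\le\lceil k/r\rceil-1$ gives $\tau r<k$;
\item minimum distance $d'\ge d$, since its codewords are codewords of $\mathcal{C}$ with zeros removed.
\end{itemize}

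I will build $T$ greedily. Put $T_0=\emptyset$. Given $T_{j-1}$ with $\dim\mathcal{C}|_{T_{j-1}}\le (j-1)r<k$, there is a coordinate $i$ with $\dim\mathcal{C}|_{T_{j-1}\cup\{i\}}>\dim\mathcal{C}|_{T_{j-1}}$; otherwise $T_{j-1}$ would be an information set, forcing $\dim\mathcal{C}|_{T_{j-1}}=k$. Take its recovery set $S_i$, with $|S_i|\le r+\rho-1$ and $d(\mathcal{C}|_{S_i})\ge\rho$, and set $T_j=T_{j-1}\cup S_i$.

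The main obstacle is showing the rank increment is at most $r$ while the size increment is at most $r+\rho-1$. The punctured code $\mathcal{C}|_{S_i}$ has length at most $r+\rho-1$ and distance at least $\rho$, so the Singleton bound gives $\dim\mathcal{C}|_{S_i}\le |S_i|-\rho+1\le r$. Hence $\dim\mathcal{C}|_{T_j}\le\dim\mathcal{C}|_{T_{j-1}}+r$, and by induction $\dim\mathcal{C}|_{T_\tau}\le\tau r$ and $|T_\tau|\le\tau(r+\rho-1)$.

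If the greedy step ever stalls because $\dim\mathcal{C}|_{T_{j-1}}=k$, that would require $(j-1)r\ge k$, which contradicts $j-1<\tau\le\tau_{\max}$. If $|T|$ falls short of $\tau(r+\rho-1)$, I pad $T$ with arbitrary extra coordinates. Padding does not raise $\dim\mathcal{C}|_T$ above what the padded shortening inequality needs: $k'\ge k-\dim\mathcal{C}|_T$ still holds, but the dimension bound could break. The cleaner route is to avoid exact size and use monotonicity instead. Puncturing $\mathcal{C}'$ to the shorter length would lower $d$, so I do not puncture. Instead I use that $\nopt$ and $\kopt$ are monotone: a longer code is fine for the $n$-bound, and for $\kopt$ I note $\kopt(m,d)$ is nondecreasing in $m$. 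This gives $k-\tau r\le k'\le\kopt(n-|T|,d)\le\kopt(n-\tau(r+\rho-1)\ldots)$. The direction here needs care. Since $|T|\le\tau(r+\rho-1)$ we have $n-|T|\ge n-\tau(r+\rho-1)$, which is the wrong direction. So I will shorten further: shorten $\mathcal{C}'$ on $\tau(r+\rho-1)-|T|$ extra coordinates. Each shortening lowers length by one and dimension by at most one, but the rank increment there is already accounted for, since the extra coordinates add at most their count to the rank. Then $k'\ge k-\tau r-(\text{extra})$, so it is better to enlarge $T$ to exact size with $\dim\mathcal{C}|_T\le\tau r+(\tau(r+\rho-1)-|T|)$. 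This is too weak in general, so I would adopt the standard fix (as in \cite{gao2025griesmer,cadambe2015bounds}): choose the extra coordinates inside further recovery sets, or simply use the bound in the form $k-\tau r\le\kopt(n-|T|,d)$ together with a padding argument in which added coordinates lie in $T$'s span closure. Resolving this size bookkeeping is where I expect the real work.

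Granting the claim, the three inequalities follow from the definitions. $\mathcal{C}'$ is an $[n',\ge k-\tau r,\ge d]$ code; passing to a subcode of dimension exactly $k-\tau r$ keeps distance at least $d$. This yields the three bounds, each read off by definition of the optimal quantity:
\begin{itemize}
\item $\nopt(k-\tau r,d)\le n-\tau(r+\rho-1)$, giving \eqref{eq:ADgriesmer};
\item $k-\tau r\le\kopt(n-\tau(r+\rho-1),d)$, giving \eqref{eq:ADsphere};
\item $d\le\dopt(n-\tau(r+\rho-1),k-\tau r)$, giving \eqref{eq:ADplotkin}.
\end{itemize}
Optimizing over $\tau$ finishes the proof.
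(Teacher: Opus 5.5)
You have the right overall route: grow a union of recovery sets greedily, shorten on it, and read off $\nopt$, $\kopt$, $\dopt$. This is the Cadambe--Mazumdar argument behind \cite[Th.~2]{gao2025griesmer}; the paper only cites that result and gives no proof of its own. As written, though, your proposal has a genuine gap exactly where you flagged it. Write $\mathrm{rk}(X)=\dim\mathcal{C}|_X$, the rank of the generator-matrix columns indexed by $X$. The only invariants you carry are $|T_\tau|\le\tau(r+\rho-1)$ and $\mathrm{rk}(T_\tau)\le\tau r$, and your rank increment per step is bounded via $\dim\mathcal{C}|_{S_i}\le r$ regardless of how much $S_i$ overlaps $T_{j-1}$. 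So nothing in your argument excludes $|T_\tau|$ being much smaller than $\tau(r+\rho-1)$ while $\mathrm{rk}(T_\tau)$ is as large as $\tau r$. Padding then costs up to one unit of rank per added coordinate. Shortening on $T_\tau$ directly only gives $k-\tau r\le\kopt(n-|T_\tau|,d)$, which is the wrong direction. Your suggested repairs are not arguments: the span closure of $T$ may be $T$ itself, and ``extra coordinates inside further recovery sets'' is unspecified.

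The missing idea is to track the nullity $|T|-\mathrm{rk}(T)$. Use $d(\mathcal{C}|_{S_i})\ge\rho$ in its equivalent form rather than through the Singleton bound on all of $S_i$: for every $E\subseteq S_i$ with $|E|\le\rho-1$ one has $\mathrm{rk}(S_i\setminus E)=\mathrm{rk}(S_i)$. Apply this to the new part $A=S_i\setminus T_{j-1}$.
\begin{enumerate}
\item Since coordinate $i\in A$ is not in the span of $T_{j-1}$, we must have $|A|\ge\rho$. Otherwise $E=A$ would put $i$ in the span of $S_i\setminus A\subseteq T_{j-1}$.
\item For any $E\subseteq A$ with $|E|=\rho-1$, the columns in $E$ lie in the span of $S_i\setminus E\subseteq T_{j-1}\cup(A\setminus E)$. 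Hence $\mathrm{rk}(T_j)\le\mathrm{rk}(T_{j-1})+|A|-(\rho-1)$.
\item So the nullity grows by at least $\rho-1$ per step, giving $|T_\tau|\le\tau(r+\rho-1)$ and $|T_\tau|-\mathrm{rk}(T_\tau)\ge\tau(\rho-1)$.
\end{enumerate}
With this invariant, padding with arbitrary coordinates is harmless. Each added coordinate raises $|T|$ by one and $\mathrm{rk}(T)$ by at most one, so the nullity never decreases. At $|T|=\tau(r+\rho-1)$ you therefore get $\mathrm{rk}(T)\le\tau r$. Enough coordinates exist: padding up to all of $\{1,\dots,n\}$ would give $n-k\ge\tau(\rho-1)$, so $n\ge k+\tau(\rho-1)>\tau(r+\rho-1)$. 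With this set $T$, your shortening step and the read-off of the three inequalities go through as you wrote them.
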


As a consequence of Lemma~\ref{lem:cADB}, we obtain the following explicit
bounds for $(r,\rho)$-LRCs. For brevity we write
\begin{equation}\label{eq:tausp}
\tau_{\max}^{\mathrm{sp}}
=\min\left\{\tau_{\max},\left\lfloor\frac{n-1}{r+\rho-1}\right\rfloor\right\},
\end{equation}
which is the largest shortening index for which the residual length
$n-\tau(r+\rho-1)$ remains positive.

\begin{theorem}\label{th:cLRCbound}
Let $\mathcal{C}$ be a $q$-ary $(r,\rho)$-LRC with parameters $[n,k,d]$. Then
the following bounds hold.
\begin{enumerate}
\item Griesmer-like bound:
\begin{equation}\label{eq:cGriesmer}
n\geq\max_{0\leq\tau\leq\tau_{\max}}
\left\{\tau(r+\rho-1)+\sum_{i=0}^{k-\tau r-1}
\left\lceil\frac{d}{q^{i}}\right\rceil\right\}.
\end{equation}
\item Sphere-packing-like bound:
\begin{equation}\label{eq:cSphere}
\begin{aligned}
k\leq{}& n-\max_{0\leq\tau\leq\tau_{\max}^{\mathrm{sp}}}\Bigg\{\tau(\rho-1)\\
&+\log_{q}\left(\sum_{i=0}^{\left\lfloor\frac{d-1}{2}\right\rfloor}
\binom{n-\tau(r+\rho-1)}{i}(q-1)^{i}\right)\Bigg\}.
\end{aligned}
\end{equation}
\item Plotkin-like bound:
\begin{equation}\label{eq:cPlotkin}
d\leq\min_{0\leq\tau\leq\tau_{\max}}
\left\{\frac{q^{\,k-\tau r-1}(q-1)\bigl(n-\tau(r+\rho-1)\bigr)}
{q^{\,k-\tau r}-1}\right\}.
\end{equation}
\end{enumerate}
\end{theorem}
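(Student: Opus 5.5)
The plan is to derive each of the three explicit bounds from the corresponding inequality in Lemma~\ref{lem:cADB}. In each case we replace the optimal quantity $\nopt$, $\kopt$ or $\dopt$ by the matching classical bound for linear codes \cite{huffman2010fundamentals}: Griesmer, sphere-packing (Hamming) and Plotkin. The substitution is legitimate because every such classical bound holds for \emph{every} $q$-ary linear code. In particular it holds for an optimal code with the given pair of parameters, so it bounds $\nopt$, $\kopt$ or $\dopt$ in the required direction. The lemma's max/min over $\tau$ is then preserved term by term.

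For the Griesmer-like bound \eqref{eq:cGriesmer}, fix $\tau\le\tau_{\max}$. Then $k-\tau r\ge 1$, since $\tau\le\lceil k/r\rceil-1$ gives $\tau r<k$. The Griesmer bound applied to an optimal $[\nopt(k-\tau r,d),\,k-\tau r,\,d]_q$ code gives
\[
\nopt(k-\tau r,d)\ge\sum_{i=0}^{k-\tau r-1}\lceil d/q^{i}\rceil .
\]
Substituting this into \eqref{eq:ADgriesmer} yields \eqref{eq:cGriesmer}.

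For the Plotkin-like bound \eqref{eq:cPlotkin}, use the linear-code Plotkin bound in its averaging form: for an $[N,K,D]_q$ code,
\[
D\le \frac{N q^{K-1}(q-1)}{q^{K}-1}.
\]
This follows because the total weight over all codewords is at most $Nq^{K-1}(q-1)$, each coordinate being nonzero in at most $q^{K}-q^{K-1}$ codewords, while there are $q^K-1$ nonzero codewords each of weight at least $D$. Applying it with $N=n-\tau(r+\rho-1)$ and $K=k-\tau r\ge1$ to an optimal code bounds $\dopt$. Substituting into \eqref{eq:ADplotkin} gives \eqref{eq:cPlotkin}. If the residual length $N$ is nonpositive, the right-hand side is $\le 0$. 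That case cannot occur for a genuine LRC with $d\ge1$, because \eqref{eq:ADplotkin} already forces a nontrivial residual code; so it is harmless, or it may simply be noted.

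For the sphere-packing-like bound \eqref{eq:cSphere}, restrict to $\tau\le\tau_{\max}^{\mathrm{sp}}$, so that $N=n-\tau(r+\rho-1)\ge1$ and the binomial sum makes sense. The Hamming bound applied to an optimal $[N,\kopt(N,d),d]_q$ code gives
\[
q^{\kopt(N,d)}\sum_{i=0}^{\lfloor (d-1)/2\rfloor}\binom{N}{i}(q-1)^i\le q^N .
\]
Taking $\log_q$, adding $\tau r$, and using $N+\tau r=n-\tau(\rho-1)$, \eqref{eq:ADsphere} gives
\[
k\le n-\tau(\rho-1)-\log_q\Big(\sum_{i}\binom{N}{i}(q-1)^i\Big)
\]
for every such $\tau$. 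Minimizing over $\tau$ is the same as subtracting the maximum, which is \eqref{eq:cSphere}. Restricting the range of $\tau$ only weakens the minimum, so the bound remains valid.

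The main obstacle is only bookkeeping at the edges of the $\tau$-range. We must check $k-\tau r\ge1$ so that the Griesmer and Plotkin bounds are applied to genuine nonzero codes, and $N\ge1$ for the sphere-packing bound. We must also confirm that $d$ does not exceed what the residual length allows; otherwise $\kopt$ is undefined. I would handle the latter by invoking the convention implicit in Lemma~\ref{lem:cADB}, namely that these optimal quantities exist whenever the lemma applies.
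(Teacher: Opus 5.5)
Your proposal is correct and matches the paper's proof: each explicit bound comes from inserting the classical Griesmer, sphere-packing and Plotkin bounds for linear codes into the corresponding inequality of Lemma~\ref{lem:cADB}. For the Griesmer part the paper just cites \cite[Cor.~3]{gao2025griesmer}, which is exactly your substitution. The edge cases you flag can be settled directly rather than by convention. For every $0\leq\tau\leq\tau_{\max}$, \eqref{eq:ADgriesmer} together with the Singleton bound gives $n-\tau(r+\rho-1)\geq\nopt(k-\tau r,d)\geq k-\tau r+d-1\geq d\geq 1$, so the residual length is always positive and $\kopt$, $\dopt$ are well defined.
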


\begin{proof}
Assertion~1) is \cite[Cor.~3]{gao2025griesmer}. For~2), apply the classical
sphere-packing bound \cite[Th.~1.12.1]{huffman2010fundamentals} to a code of
length $n'=n-\tau(r+\rho-1)$ and dimension $k'=k-\tau r$ in
\eqref{eq:ADsphere}, which gives
$k-\tau r\leq n'-\log_q V$ with $V$ the Hamming-ball volume appearing in
\eqref{eq:cSphere}; rearranging and using
$n'=n-\tau(r+\rho-1)$ yields \eqref{eq:cSphere}. The requirement $n'\geq1$
forces $\tau\leq\lfloor(n-1)/(r+\rho-1)\rfloor$, whence the range
\eqref{eq:tausp}. Assertion~3) follows by applying the classical Plotkin bound
\cite[Th.~2.2.1]{huffman2010fundamentals} to the effective parameters $n'$ and
$k'$ in \eqref{eq:ADplotkin}.
\end{proof}

\subsection{Quantum \texorpdfstring{$(r,\rho)$}{(r,rho)}-LRCs}

A $q$-ary QECC $[[n,\kappa,\delta]]_q$ is a $q^{\kappa}$-dimensional subspace of
$(\mathbb{C}^{q})^{\otimes n}$ that corrects arbitrary errors on up to
$\lfloor(\delta-1)/2\rfloor$ qudits; its parameters obey the quantum Singleton
bound $2\delta\leq n-\kappa+2$, with equality defining a quantum maximum
distance separable code.

\begin{proposition}[Hermitian CSS construction \cite{calderbank1996good,steane1996error}]
\label{prop:css}
For $i=1,2$, let $\mathcal{C}_i$ be an $[n,k_i,d_i]_{q^2}$ linear code such that
$\mathcal{C}_2^{\perp_H}\subseteq\mathcal{C}_1$. Then there exists an
$[[n,k_1+k_2-n,\delta]]_q$ quantum code with
\[
\delta=\begin{cases}
d_1, & \text{if } \mathcal{C}_2^{\perp_H}=\mathcal{C}_1,\\[1mm]
\min\bigl\{\wt(\mathcal{C}_1\setminus\mathcal{C}_2^{\perp_H}),\,
\wt(\mathcal{C}_2\setminus\mathcal{C}_1^{\perp_H})\bigr\},
& \text{if } \mathcal{C}_2^{\perp_H}\subsetneq\mathcal{C}_1.
\end{cases}
\]
\end{proposition}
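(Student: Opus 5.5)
This is a known result, quoted from \cite{calderbank1996good,steane1996error}. If I were to reprove Proposition~\ref{prop:css}, I would go through the stabilizer formalism over $\Fqq$. The general stabilizer theorem for $q$-ary qudits will be taken as given. Identify $\Fqq^{\,n}$ with the error group $\{X(\mathbf{a})Z(\mathbf{b})\}$ modulo phases, with the trace-alternating form $\langle \mathbf{u},\mathbf{v}\rangle_a=\mathrm{tr}_{q^2/q}\bigl((\langle\mathbf{u},\mathbf{v}\rangle_H-\langle\mathbf{v},\mathbf{u}\rangle_H)/(\omega-\omega^q)\bigr)$ playing the role of the commutation form. The theorem then says the following. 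An additive (i.e.\ $\Fq$-linear) subgroup $S\subseteq\Fqq^{\,n}$ that is isotropic for $\langle\cdot,\cdot\rangle_a$ and has $|S|=q^{\,n-\kappa}$ yields an $[[n,\kappa,\delta]]_q$ code. Here $\delta=\wt(S^{\perp_a}\setminus S)$, or $\wt(S^{\perp_a})$ when $S=S^{\perp_a}$.

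The plan has three steps.

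\textbf{Step 1: choose $S$.} The hypothesis $\mathcal{C}_2^{\perp_H}\subseteq\mathcal{C}_1$ is equivalent to $\mathcal{C}_1^{\perp_H}\subseteq\mathcal{C}_2$. I would build $S$ from the pair $(\mathcal{C}_1^{\perp_H},\mathcal{C}_2^{\perp_H})$ in the CSS way: one family of generators comes from one dual code, the other from the second dual code twisted by a fixed scalar. The aim is that $\Fq$-linearity holds automatically and that $S$ has size $q^{(n-k_1)+(n-k_2)}=q^{\,n-(k_1+k_2-n)}$, which produces the claimed dimension $\kappa=k_1+k_2-n$.

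\textbf{Step 2: check isotropy and compute $S^{\perp_a}$.} Isotropy reduces to the Hermitian inner products between the two dual codes vanishing in the required trace sense, and this is exactly where $\mathcal{C}_1^{\perp_H}\subseteq\mathcal{C}_2$ is used. For the normalizer, I would show that $S^{\perp_a}$ is the corresponding combination of $\mathcal{C}_1$ and $\mathcal{C}_2$. The tool is that an $\Fqq$-linear code $\mathcal{C}$ satisfies $\mathcal{C}^{\perp_a}=\mathcal{C}^{\perp_H}$, so $\perp_a$ and $\perp_H$ agree on the linear pieces.

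\textbf{Step 3: read off the distance.} The minimum weight of $S^{\perp_a}\setminus S$ splits along the two components into $\min\{\wt(\mathcal{C}_1\setminus\mathcal{C}_2^{\perp_H}),\wt(\mathcal{C}_2\setminus\mathcal{C}_1^{\perp_H})\}$. In the degenerate case $\mathcal{C}_2^{\perp_H}=\mathcal{C}_1$ this set difference is empty. One then uses the convention that the distance is the minimum weight of the normalizer, namely $d_1$, giving the first branch of the formula.

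The main obstacle is Step~1: arranging the two Hermitian duals into a single $\Fq$-linear isotropic subgroup over $\Fqq$ with exactly the right cardinality and normalizer. In the Euclidean $\Fq$ version the two codes sit in separate $X$ and $Z$ halves. Here each $\Fqq$ coordinate already encodes both halves, so the twisting scalar and the trace form must be chosen carefully. I would first try $S=\mathcal{C}_1^{\perp_H}$ in the special case $\mathcal{C}_1=\mathcal{C}_2$, which is the Ketkar et al.\ Hermitian construction. I would then generalize by intersecting or summing $\Fq$-subspaces so that $\mathcal{C}_1^{\perp_H}\subseteq S\subseteq\mathcal{C}_2$ with the correct size.
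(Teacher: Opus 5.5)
The gap is your Step~1, and it cannot be closed. With $q$-dimensional qudits and $\mathcal{C}_1\neq\mathcal{C}_2$, the proposition is false as stated, so in general no admissible $S$ exists. The paper gives no proof here; it only cites Calderbank--Shor and Steane, whose theorem is the Euclidean CSS construction.

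Here is a counterexample. Take $q=2$, $n=5$, $\mathcal{C}_1=\{x\in\mathbb{F}_4^5:\sum_i x_i=0\}$ and $\mathcal{C}_2=\{x\in\mathbb{F}_4^5:\sum_i b_ix_i=0\}$ with $b=(1,\omega,\omega^2,1,1)$.
\begin{itemize}
\item Both are $[5,4,2]_4$ codes.
\item $\mathcal{C}_2^{\perp_H}$ is spanned by the coordinatewise square $b^{(2)}$, and $\sum_i b_i^2=(\sum_i b_i)^2=0$. Hence $\mathcal{C}_2^{\perp_H}\subsetneq\mathcal{C}_1$.
\item Both weights in the distance formula equal $2$.
\end{itemize}
The proposition would then give a $[[5,3,2]]_2$ code, which does not exist: qubit $[[n,n-2,2]]$ codes exist only for even $n$ (Rains). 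For a pure stabilizer code the reason is elementary. Detecting every single-qubit error forces the two generators of $S$ to restrict to distinct non-identity Paulis on each of the five qubits. They therefore anticommute on an odd number of positions, so they do not commute.

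Your sandwich idea shows concretely where the argument breaks. If an isotropic $S$ contains the $\Fqq$-linear code $\mathcal{C}_1^{\perp_H}$, then $\mathcal{C}_1^{\perp_H}\subseteq S\subseteq S^{\perp_a}\subseteq(\mathcal{C}_1^{\perp_H})^{\perp_a}=\mathcal{C}_1$. So $\mathcal{C}_1$ would have to be Hermitian dual-containing, which is not assumed. In the example, $\mathcal{C}_1^{\perp_H}=\langle\mathbf{1}\rangle\not\subseteq\mathcal{C}_1$, since $\sum_{i=1}^{5}1=1$ in $\mathbb{F}_4$.

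Two smaller slips:
\begin{itemize}
\item Your count is off. $|\mathcal{C}_i^{\perp_H}|=q^{2(n-k_i)}$, not $q^{n-k_i}$. Scaling an $\Fqq$-linear code by a scalar returns the same code, so a scalar twist does not halve the two duals.
\item Your symplectic form vanishes identically for even $q$. The quotient $(z-z^q)/(\omega-\omega^q)$ already lies in $\Fq$, so applying $\mathrm{tr}_{q^2/q}$ just doubles it. The standard form uses $\mathrm{tr}_{q/p}$ instead.
\end{itemize}

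What does hold is the following. Note that $\mathcal{C}_2^{\perp_H}=(\mathcal{C}_2^{(q)})^{\perp_E}$, where $\mathcal{C}_2^{(q)}$ is the image of $\mathcal{C}_2$ under coordinatewise Frobenius. The Euclidean CSS theorem over $\Fqq$, applied to $\mathcal{C}_1$ and $\mathcal{C}_2^{(q)}$, then gives an $[[n,k_1+k_2-n]]_{q^2}$ code on qudits of dimension $q^2$. Its distance is exactly the displayed minimum, because Frobenius preserves weight and maps $\mathcal{C}_1^{\perp_E}$ onto $\mathcal{C}_1^{\perp_H}$.

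On $q$-dimensional qudits, the construction is valid when $\mathcal{C}_1=\mathcal{C}_2=\mathcal{C}$ with $\mathcal{C}^{\perp_H}\subseteq\mathcal{C}$. Then $S=\mathcal{C}^{\perp_H}$ is isotropic, $|S|=q^{2(n-k)}$ and $S^{\perp_a}=\mathcal{C}$. This gives $[[n,2k-n,\wt(\mathcal{C}\setminus\mathcal{C}^{\perp_H})]]_q$, with $\delta=d$ when $\mathcal{C}=\mathcal{C}^{\perp_H}$. That is the special case you planned to try first, and it is the only case the paper relies on (Theorem~\ref{th:qLRCs}). Your proof should stop there, or switch to $q^2$-dimensional qudits, rather than attempt the generalization.
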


We next recall the definition of an $(r,\rho)$-qLRC, following the framework of
\cite{galindo2026quantum}.

\begin{definition}[{\cite[Defs.~9 and~10]{galindo2026quantum}}]\label{def:qlrc}
A QECC $\mathcal{Q}\subseteq(\mathbb{C}^{q})^{\otimes n}$ is an
\emph{$(r,\rho)$-qLRC} if, for each $i\in\{1,\dots,n\}$, there exists a set
$J\subseteq\{1,\dots,n\}$ containing $i$ with $|J|\leq r+\rho-1$ such that, for
every subset $I\subseteq J$ with $|I|=\rho-1$, there exists a trace-preserving
map $\mathcal{R}_{Q,I}^{J}$ acting only on the qudits indexed by $J$ and leaving
the remaining ones untouched, for which
\[
\mathcal{R}_{Q,I}^{J}\circ\Gamma^{I}
\bigl(|\psi\rangle\langle\psi|\bigr)=|\psi\rangle\langle\psi|
\]
for every $|\psi\rangle\in\mathcal{Q}$, where $\Gamma^{I}$ is defined as in
\cite[eq.~(5)]{galindo2026quantum}.
\end{definition}

The following result of Galindo \textit{et al.} \cite{galindo2026quantum}
characterizes the $(r,\rho)$-qLRCs obtained from Hermitian dual-containing
classical linear codes.

\begin{theorem}[{\cite[Th.~29]{galindo2026quantum}}]\label{th:qLRCs}
If $\mathcal{C}$ is an $[n,k,d]_{q^2}$ Hermitian dual-containing code with
$(r,\rho)$-locality and $d^{\perp_H}\geq\rho$, then there exists an
$[[n,\kappa,\delta]]_q$ $(r,\rho)$-qLRC with the same $(r,\rho)$-locality, where
$\kappa=2k-n$ and
\[
\delta=\begin{cases}
d, & \text{if } \mathcal{C}^{\perp_H}=\mathcal{C},\\[1mm]
\wt(\mathcal{C}\setminus\mathcal{C}^{\perp_H})\geq d,
& \text{if } \mathcal{C}^{\perp_H}\subsetneq\mathcal{C}.
\end{cases}
\]
Such an $(r,\rho)$-qLRC is said to be \emph{pure} if $\delta=d$, and
\emph{impure} otherwise.
\end{theorem}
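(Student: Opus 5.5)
The plan is to realize the quantum code as the Hermitian CSS code attached to $\mathcal{C}$, read off $\kappa$ and $\delta$ from Proposition~\ref{prop:css}, and then prove $(r,\rho)$-locality separately by a syndrome argument that only uses stabilizers supported inside a recovery set.

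\textbf{Parameters.} Apply Proposition~\ref{prop:css} with $\mathcal{C}_1=\mathcal{C}_2=\mathcal{C}$. This is legitimate since $\mathcal{C}^{\perp_H}\subseteq\mathcal{C}$. It gives an $[[n,2k-n,\delta]]_q$ code $\mathcal{Q}$. When $\mathcal{C}^{\perp_H}=\mathcal{C}$ we get $\delta=d$. Otherwise both terms in the minimum coincide, so $\delta=\wt(\mathcal{C}\setminus\mathcal{C}^{\perp_H})$, and this is $\geq d$ because $\mathcal{C}\setminus\mathcal{C}^{\perp_H}\subseteq\mathcal{C}\setminus\{\mathbf 0\}$. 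Equivalently, $\mathcal{Q}$ is the stabilizer code whose stabilizer corresponds to $\mathcal{C}^{\perp_H}$ and whose normalizer corresponds to $\mathcal{C}$, under the standard correspondence between $\Fqq^{\,n}$ and $q$-ary Pauli operators. In that correspondence, commutation of two Paulis is governed by the trace of the Hermitian form.

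\textbf{Locality.} Fix $i$, let $J$ be its classical recovery set, so $|J|\le r+\rho-1$ and $d(\mathcal{C}|_J)\ge\rho$. Let $I\subseteq J$ with $|I|=\rho-1$.

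1. Consider the shortened code $(\mathcal{C}^{\perp_H})_J$, consisting of codewords of $\mathcal{C}^{\perp_H}$ supported in $J$, restricted to $J$. By the standard shortening/puncturing duality, transported to the Hermitian form, it equals $(\mathcal{C}|_J)^{\perp_H}$.
2. Since $\mathcal{C}|_J$ has minimum distance $\ge\rho$, its dual is an orthogonal array of strength $\rho-1$. Hence the projection of $(\mathcal{C}^{\perp_H})_J$ onto any $\rho-1$ coordinates, in particular onto $I$, is surjective onto $\Fqq^{\,I}$.
3. Take any nonzero error vector $\mathbf e$ supported on $I$. By surjectivity there is a stabilizer vector $\mathbf s$ supported in $J$ with $\langle \mathbf e,\mathbf s\rangle_H\neq0$. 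Rescaling $\mathbf s$ by a suitable element of $\Fqq$ makes the trace nonzero, so the corresponding stabilizer anticommutes (in the $q$-ary sense) with $\mathbf e$.
4. The syndrome map from errors on $I$ to eigenvalues of the stabilizers supported in $J$ is therefore linear with trivial kernel, hence injective.

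The recovery map $\mathcal{R}^J_{Q,I}$ is then: measure the stabilizers supported in $J$, read off the unique $\mathbf e$, and apply its inverse. This acts only on $J$. Finally, the erasure channel $\Gamma^I$ is replaced by its Pauli-twirled version on $I$, a standard reduction. That yields the identity required in Definition~\ref{def:qlrc}.

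The role of the hypothesis $d^{\perp_H}\geq\rho$ is less clear to me. I would use it to check that no nonzero stabilizer is supported inside a set of size $\rho-1$. This keeps the syndrome/erasure correspondence clean and is presumably needed to match the precise framework of \cite{galindo2026quantum}. I would verify there exactly where it enters.

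\textbf{Main obstacle.} The delicate step is the passage from the quantum recovery condition of Definition~\ref{def:qlrc}, with a map acting only on $J$, to the classical statement in steps 1--2. This requires a local form of the Knill--Laflamme conditions: $I$ is correctable using only $J$ iff every Pauli on $I$ is either detected by stabilizers supported in $J$ or acts trivially. I expect to invoke this as in \cite{Golowic,galindo2026quantum}, rather than rederive it.
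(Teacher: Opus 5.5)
Your proof is correct. The paper gives no argument of its own: it imports the statement verbatim from \cite[Th.~29]{galindo2026quantum}, where it is obtained by specializing their characterization of stabilizer $(r,\rho)$-qLRCs to the Hermitian construction. Your route is therefore genuinely different. You read $\kappa$ and $\delta$ off Proposition~\ref{prop:css} with $\mathcal{C}_1=\mathcal{C}_2=\mathcal{C}$, and then prove locality constructively:
shortening--puncturing duality gives $(\mathcal{C}^{\perp_H})_J=(\mathcal{C}|_J)^{\perp_H}$;
the bound $d(\mathcal{C}|_J)\geq\rho$ makes this an orthogonal array of strength $\rho-1$;
hence every nonzero Pauli error on $I$ is detected by a stabilizer supported in $J$.
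The procedure ``reset $I$ to the maximally mixed state, measure the stabilizers supported in $J$, undo the unique error'' is then an explicit trace-preserving recovery acting only on $J$. Your approach buys a self-contained, elementary proof of exactly the implication the theorem asserts, with explicit recovery maps. The cited route buys generality, since the characterization applies to arbitrary stabilizer codes.

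Your own argument already settles two of your worries. First, the ``main obstacle'' is not one. You only need the sufficiency half of a local Knill--Laflamme criterion, and your decoder proves it directly, so nothing has to be invoked. Second, the hypothesis $d^{\perp_H}\geq\rho$ is never used. A nonzero word of $\mathcal{C}^{\perp_H}\subseteq\mathcal{C}$ supported inside some $I\subseteq J$ with $|I|=\rho-1$ would be a nonzero word of $\mathcal{C}|_J$ of weight less than $\rho$, so $d(\mathcal{C}|_J)\geq\rho$ already rules it out. Stabilizers supported elsewhere play no role in recovery inside $J$, so the hypothesis is simply inherited from the cited framework.

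One small correction: for odd $q$, commutation of Paulis is not governed by the trace of $\langle\cdot,\cdot\rangle_H$, which is symmetric. It is governed by the trace-alternating form $\mathrm{tr}_{q/p}\bigl(\theta^{-1}(\langle\mathbf{x},\mathbf{y}\rangle_H-\langle\mathbf{y},\mathbf{x}\rangle_H)\bigr)$, for a fixed nonzero $\theta\in\Fqq$ with $\theta^{q}=-\theta$. Your rescaling step survives. If $a=\langle\mathbf{e},\mathbf{s}\rangle_H\neq0$, then $\langle\mathbf{e},\lambda\mathbf{s}\rangle_H-\langle\lambda\mathbf{s},\mathbf{e}\rangle_H=\mu^{q}-\mu$ with $\mu=\lambda a^{q}$. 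As $\lambda$ varies this runs over the whole kernel of $\mathrm{tr}_{q^2/q}$, so after dividing by $\theta$ it covers all of $\mathbb{F}_q$. Hence some $\lambda$ makes the form nonzero.
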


\section{Alphabet-Dependent Bounds for Pure
\texorpdfstring{$(r,\rho)$}{(r,rho)}-qLRCs}\label{sec:adb}

In this section we derive three alphabet-dependent bounds for pure
$(r,\rho)$-qLRCs and analyze their asymptotic behaviour. Throughout, we write
\begin{equation}\label{eq:qtau}
T=\left\lceil\frac{n+\kappa}{2r}\right\rceil-1,\quad
T^{\mathrm{sp}}=\min\left\{T,\left\lfloor\frac{n-1}{r+\rho-1}\right\rfloor\right\}.
\end{equation}

\begin{remark}
If an $[[n,\kappa,\delta]]_q$ code arises from Theorem~\ref{th:qLRCs}, then
$\kappa=2k-n$, so $n+\kappa=2k$ is even and $(n+\kappa)/2=k$ is a positive
integer. All the expressions below are therefore well defined.
\end{remark}

\begin{theorem}\label{th:qADB}
Let $\mathcal{Q}$ be a pure $[[n,\kappa,\delta]]_q$ qLRC with
$(r,\rho)$-locality obtained from the Hermitian construction of
Theorem~\ref{th:qLRCs}. Then the following bounds hold.
\begin{enumerate}
\item Pure Griesmer-like bound:
\begin{equation}\label{eq:qGriesmer}
n\geq\max_{0\leq\tau\leq T}
\left\{\tau(r+\rho-1)+\sum_{i=0}^{\frac{n+\kappa}{2}-\tau r-1}
\left\lceil\frac{\delta}{q^{2i}}\right\rceil\right\}.
\end{equation}
\item Pure sphere-packing-like bound:
\begin{equation}\label{eq:qSphere}
\begin{aligned}
\kappa\leq{}& n-2\max_{0\leq\tau\leq T^{\mathrm{sp}}}\Bigg\{\tau(\rho-1)\\
&+\log_{q^2}\left(\sum_{i=0}^{\left\lfloor\frac{\delta-1}{2}\right\rfloor}
\binom{n-\tau(r+\rho-1)}{i}(q^{2}-1)^{i}\right)\Bigg\}.
\end{aligned}
\end{equation}
\item Pure Plotkin-like bound:
\begin{IEEEeqnarray}{rCl}
\delta &\leq& \min_{0\leq\tau\leq T}\Bigg\{\IEEEnonumber\\*
&&\qquad\frac{q^{\,n+\kappa-2\tau r-2}(q^{2}-1)
\bigl(n-\tau(r+\rho-1)\bigr)}
{q^{\,n+\kappa-2\tau r}-1}\Bigg\}.\IEEEeqnarraynumspace\label{eq:qPlotkin}
\end{IEEEeqnarray}
\end{enumerate}
\end{theorem}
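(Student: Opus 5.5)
The plan is to reduce everything to Theorem~\ref{th:cLRCbound}, applied over the alphabet $\Fqq$ to the classical code underlying $\mathcal{Q}$. By hypothesis, $\mathcal{Q}$ comes from Theorem~\ref{th:qLRCs}. So there is an $[n,k,d]_{q^2}$ Hermitian dual-containing code $\mathcal{C}$ with $(r,\rho)$-locality and $d^{\perp_H}\geq\rho$, such that $\kappa=2k-n$. Purity gives $\delta=d$. Hence
\[
k=\frac{n+\kappa}{2},\qquad d=\delta,
\]
and $\mathcal{C}$ is a $q^2$-ary $(r,\rho)$-LRC with parameters $[n,(n+\kappa)/2,\delta]$. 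The remark preceding the theorem guarantees that $(n+\kappa)/2$ is a positive integer. In this translation, $\tau_{\max}=\lceil k/r\rceil-1$ becomes exactly $T=\lceil (n+\kappa)/(2r)\rceil-1$. Likewise, $\tau_{\max}^{\mathrm{sp}}$ becomes $T^{\mathrm{sp}}$.

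Next I substitute into the three parts of Theorem~\ref{th:cLRCbound}, with $q$ replaced by $q^2$.

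\textbf{Griesmer.} Part~1 gives $n\geq\tau(r+\rho-1)+\sum_{i=0}^{k-\tau r-1}\lceil d/(q^2)^i\rceil$. Inserting $k=(n+\kappa)/2$ and $d=\delta$ yields \eqref{eq:qGriesmer} directly.

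\textbf{Sphere-packing.} Part~2 gives
\[
k\leq n-\max_{\tau}\{\tau(\rho-1)+\log_{q^2}V_\tau\},
\]
where $V_\tau$ is the Hamming-ball volume over $\Fqq$ of radius $\lfloor(\delta-1)/2\rfloor$ in length $n-\tau(r+\rho-1)$. Writing $\kappa=2k-n$, we get $\kappa\leq 2n-2\max\{\cdots\}-n$, which is \eqref{eq:qSphere}.

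\textbf{Plotkin.} Part~3 with $q\to q^2$ gives
\[
d\leq\frac{q^{2(k-\tau r-1)}(q^2-1)\,(n-\tau(r+\rho-1))}{q^{2(k-\tau r)}-1}.
\]
Since $2k=n+\kappa$, the exponents become $n+\kappa-2\tau r-2$ and $n+\kappa-2\tau r$, which is \eqref{eq:qPlotkin}.

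The main obstacle is checking that Theorem~\ref{th:cLRCbound} (via Lemma~\ref{lem:cADB}) is legitimately applicable. Lemma~\ref{lem:cADB} assumes $1\leq r<k$. If $r\geq k$, then $T=0$, and each bound reduces to the plain classical Griesmer, sphere-packing or Plotkin bound for the $[n,k,\delta]_{q^2}$ code $\mathcal{C}$. Those hold unconditionally, so this degenerate case should be handled separately with that remark.

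Two further points need care. First, the classical Plotkin bound as used in part~3 is the averaging bound $d\leq nq^{k-1}(q-1)/(q^k-1)$ for linear codes. It requires no support condition beyond an effective dimension $k-\tau r\geq1$, and $\tau\leq T$ ensures this. Second, in the Griesmer range I should confirm that the effective length $n-\tau(r+\rho-1)$ stays positive, or else note that the inequality is trivially consistent there. I would also state explicitly that purity ($\delta=d$) is what allows $d$ to be replaced by $\delta$. In the impure case one only has $\delta\geq d$, so the bounds would not transfer, which is why the theorem is restricted to pure codes.
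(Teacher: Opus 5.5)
Your proposal is correct and follows the paper's proof. Both take the underlying Hermitian dual-containing $[n,(n+\kappa)/2,\delta]_{q^2}$ $(r,\rho)$-LRC, using purity to identify $d=\delta$, and substitute $q\to q^2$, $k\to(n+\kappa)/2$, $d\to\delta$ into the three parts of Theorem~\ref{th:cLRCbound}.

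Your separate treatment of the case $r\geq k$ is slightly more careful than the paper. In that case $T=0$ and only the plain classical Griesmer, Hamming and Plotkin bounds are needed. The paper only remarks $r\leq(n+\kappa)/2$ to guarantee nonempty ranges, and never addresses the hypothesis $r<k$ of Lemma~\ref{lem:cADB}.
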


\begin{proof}
By Theorem~\ref{th:qLRCs}, the existence of a pure $[[n,\kappa,\delta]]_q$ qLRC
$\mathcal{Q}$ with $(r,\rho)$-locality implies the existence of a classical
Hermitian dual-containing linear code $\mathcal{C}$ over $\Fqq$ with
$(r,\rho)$-locality, parameters
\begin{equation}\label{eq:qparam}
\left[n,\ \frac{n+\kappa}{2},\ \delta\right]_{q^{2}},
\end{equation}
and $d^{\perp_H}\geq\rho$. Since the locality of a linear code cannot exceed its
dimension, $r\leq(n+\kappa)/2$ and hence
$\lceil(n+\kappa)/(2r)\rceil\geq1$, so the ranges in \eqref{eq:qtau} are
nonempty. Applying the Griesmer-like bound \eqref{eq:cGriesmer} to
$\mathcal{C}$ with the parameters \eqref{eq:qparam}, that is, with $q$ replaced
by $q^{2}$, $k$ by $(n+\kappa)/2$ and $d$ by $\delta$, gives
\eqref{eq:qGriesmer}. Equivalently, on substituting $t=2i$, the inner sum in
\eqref{eq:qGriesmer} runs over the even integers
$0\leq t\leq n+\kappa-2\tau r-2$. Applying the sphere-packing-like bound
\eqref{eq:cSphere} and the Plotkin-like bound \eqref{eq:cPlotkin} to the same
code yields \eqref{eq:qSphere} and \eqref{eq:qPlotkin}, respectively.
\end{proof}

\subsection{Asymptotic Analysis}

The asymptotic regime, in which the code length tends to infinity, provides a
natural framework for comparing the tightness of different bounds. Let
\[
\mathcal{R}=\lim_{n\to\infty}\frac{\kappa}{n},\qquad
\Delta=\lim_{n\to\infty}\frac{\delta}{n}
\]
denote the rate and the relative distance, respectively, and set
$M=r+\rho-1$. With this notation, the \emph{GG Singleton-like} bound
\eqref{eq:GGSingleton} admits the asymptotic form established in
\cite[eq.~(11)]{luo2025bounds},
\begin{equation}\label{eq:aGG}
\mathcal{R}\leq\left(\frac{M-1}{M}\right)^{2}
-\frac{(2M-1)(M-1)}{M^{2}}\,\Delta+o(1).
\end{equation}

We next derive the asymptotic form of the pure \emph{Singleton-like} bound
\eqref{eq:qSingleton}. Using $\lceil x\rceil\geq x$ in \eqref{eq:qSingleton},
\[
2\delta\leq n-\kappa+2-2(\rho-1)\left(\frac{n+\kappa}{2r}\right)+2(\rho-1),
\]
and dividing by $n$ and letting $n\to\infty$ gives
\begin{equation}\label{eq:aSingleton}
\mathcal{R}\leq\frac{r-\rho+1}{M}-\frac{2r}{M}\,\Delta+o(1).
\end{equation}

Extending the asymptotic analysis of \cite{li2025optimal} and
\cite{li2025improved}, we now obtain the asymptotic forms of the
alphabet-dependent bounds of Theorem~\ref{th:qADB}. For the pure
\emph{Griesmer-like} bound \eqref{eq:qGriesmer}, evaluating at the largest
admissible index $\tau=\frac{n+\kappa}{2r}-1$ gives
\[
n\geq\left(\frac{n+\kappa}{2r}-1\right)M+\sum_{i=0}^{r-1}
\left\lceil\frac{\delta}{q^{2i}}\right\rceil .
\]
For any fixed integer $t\in[1,r]$,
\begin{IEEEeqnarray*}{rCl}
\sum_{i=0}^{r-1}\left\lceil\frac{\delta}{q^{2i}}\right\rceil
&\geq&\sum_{i=0}^{t-1}\frac{\delta}{q^{2i}}+\sum_{i=t}^{r-1}1\\
&=&\delta\left[\frac{1-q^{-2t}}{1-q^{-2}}\right]+(r-t)\\
&=&\left(\frac{q^{2t}-1}{q^{2t}-q^{2t-2}}\right)\delta+(r-t),
\end{IEEEeqnarray*}
so that
\[
n\geq\left(\frac{n+\kappa}{2r}-1\right)M
+\left(\frac{q^{2t}-1}{q^{2t}-q^{2t-2}}\right)\delta+(r-t).
\]
Dividing by $n$ and letting $n\to\infty$ yields
\begin{equation}\label{eq:aGriesmer}
\mathcal{R}\leq\frac{r-\rho+1}{M}
-\frac{2r}{M}\left(\frac{q^{2t}-1}{q^{2t}-q^{2t-2}}\right)\Delta+o(1).
\end{equation}
Similarly, substituting $\tau=\frac{n+\kappa}{2r}-1$ into the pure
\emph{Plotkin-like} bound \eqref{eq:qPlotkin} gives
\[
\delta\leq\left(\frac{q^{2r}-1}{q^{2r}-q^{2r-2}}\right)^{-1}
\left[n-\left(\frac{n+\kappa}{2r}-1\right)M\right],
\]
and hence, as $n\to\infty$,
\begin{equation}\label{eq:aPlotkin}
\mathcal{R}\leq\frac{r-\rho+1}{M}
-\frac{2r}{M}\left(\frac{q^{2r}-1}{q^{2r}-q^{2r-2}}\right)\Delta+o(1).
\end{equation}
Since
\begin{equation}\label{eq:monotone}
1\leq\frac{q^{2t}-1}{q^{2t}-q^{2t-2}}\leq\frac{q^{2r}-1}{q^{2r}-q^{2r-2}}
\qquad\text{for all } t\in[1,r],
\end{equation}
the pure \emph{Plotkin-like} bound \eqref{eq:aPlotkin} is tighter than the pure
\emph{Griesmer-like} bound \eqref{eq:aGriesmer}, which in turn is tighter than
the pure \emph{Singleton-like} bound \eqref{eq:aSingleton}. Comparing
\eqref{eq:aSingleton} with \eqref{eq:aGG} then gives the following hierarchy.

\begin{theorem}\label{th:hierarchy}
Let $\mathcal{Q}$ be a pure $[[n,\kappa,\delta]]_q$ qLRC with
$(r,\rho)$-locality obtained from the Hermitian construction of
Theorem~\ref{th:qLRCs}. If $r\geq3$, then the asymptotic pure
\emph{Plotkin-like} \eqref{eq:aPlotkin}, pure \emph{Griesmer-like}
\eqref{eq:aGriesmer}, pure \emph{Singleton-like} \eqref{eq:aSingleton} and
\emph{GG Singleton-like} \eqref{eq:aGG} bounds satisfy the strict hierarchy
\begin{IEEEeqnarray*}{c}
\text{Pure Plotkin-like}\;\succ\;\text{Pure Griesmer-like}\\
\succ\;\text{Pure Singleton-like}\;\succ\;\text{GG Singleton-like},
\end{IEEEeqnarray*}
where $\succ$ denotes ``is strictly tighter than''.
\end{theorem}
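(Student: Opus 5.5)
The plan is to compare the four asymptotic bounds as affine functions of $\Delta$, working in the $(\Delta,\mathcal{R})$-plane on the meaningful range $0<\Delta\le\Delta^{*}$. Here $\Delta^{*}=\frac{r-\rho+1}{2r}$ is the point where the right-hand side of \eqref{eq:aSingleton} vanishes. Write $s=\rho-1\ge1$ (so $M=r+s$) and $c_t=\frac{q^{2t}-1}{q^{2t}-q^{2t-2}}=\sum_{i=0}^{t-1}q^{-2i}$. Then \eqref{eq:aSingleton}, \eqref{eq:aGriesmer} and \eqref{eq:aPlotkin} all have the common intercept $\frac{r-s}{M}$ and slopes $-\frac{2r}{M}$, $-\frac{2r}{M}c_t$ and $-\frac{2r}{M}c_r$, respectively. \emph{Strictly tighter} will mean a strictly smaller right-hand side for every $\Delta$ in this range. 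I will take the Griesmer-like bound \eqref{eq:aGriesmer} with a fixed intermediate index $1<t<r$, for instance $t=2$; this is where $r\ge3$ is used.

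\textbf{Step 1: Plotkin-like vs.\ Griesmer-like vs.\ Singleton-like.} Since $c_t$ is a partial sum of a positive geometric series, it is strictly increasing in $t$. Hence $1=c_1<c_t<c_r$ for $1<t<r$, which sharpens \eqref{eq:monotone} to strict inequalities. Because all three lines share the intercept $\frac{r-s}{M}$ and $\Delta>0$, the strictly steeper slopes give
\[
\tfrac{r-s}{M}-\tfrac{2r}{M}c_r\Delta<\tfrac{r-s}{M}-\tfrac{2r}{M}c_t\Delta<\tfrac{r-s}{M}-\tfrac{2r}{M}\Delta .
\]
This establishes the first two relations of the hierarchy.

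\textbf{Step 2: Singleton-like vs.\ GG Singleton-like.} This is the step I expect to be the main obstacle. The slopes cannot be compared globally: $2rM\ge(2M-1)(M-1)$ is equivalent to $M(5-2\rho)\ge1$, which fails for $\rho\ge3$. So I will instead use linearity of the difference
\[
D(\Delta)=\Bigl(\tfrac{M-1}{M}\Bigr)^2-\tfrac{(2M-1)(M-1)}{M^2}\Delta-\tfrac{r-s}{M}+\tfrac{2r}{M}\Delta ,
\]
and check that $D$ is positive at both endpoints $\Delta=0$ and $\Delta=\Delta^{*}$; positivity on the whole interval then follows.
\begin{itemize}
\item At $\Delta=0$: $M^{2}D(0)=(r+s-1)^2-(r^2-s^2)=2(r+s)(s-1)+1>0$.
\item At $\Delta=\Delta^{*}$: the Singleton term vanishes, and positivity of the GG value reduces to $2r(M-1)>(2M-1)(r-s)$. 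Expanding, the difference equals $(r+s)(2s-1)>0$.
\end{itemize}
Hence $D>0$ on $[0,\Delta^{*}]$. For $\Delta>\Delta^{*}$ the Singleton-like bound already forces $\mathcal{R}<0$, so it is trivially at least as restrictive there.

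\textbf{Step 3: assembly.} Chaining Steps 1 and 2 yields the stated strict hierarchy. I will remark that the $o(1)$ terms do not affect strictness, since every comparison is strict for each fixed $\Delta$ in the range.
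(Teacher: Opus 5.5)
Your proof is correct. For the first two relations it is the paper's own argument: the paper also invokes \eqref{eq:monotone} together with the common intercept $\frac{r-\rho+1}{M}$. You add the point the paper leaves implicit, namely that strictness requires $1<t<r$, which is exactly where $r\geq3$ is used.

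You go beyond the paper on the last relation. The paper only states that comparing \eqref{eq:aSingleton} with \eqref{eq:aGG} yields it, with no computation. You observe, correctly, that the GG slope is steeper once $\rho\geq3$, so a slope comparison like Step~1 cannot work. Your substitute, checking the linear difference $D$ at both endpoints of $[0,\Delta^{*}]$, is sound, and both endpoint computations are right.

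You can also drop your two hedges: the ``at least as restrictive'' remark for $\Delta>\Delta^{*}$, and the tacit need for $r\geq\rho$ so that $(0,\Delta^{*}]$ is nonempty. From your own numbers,
\[
M^{2}D(\Delta)=2M(\rho-2)+1-\bigl(M(2\rho-5)+1\bigr)\Delta .
\]
\begin{itemize}
\item For $\rho=2$, $D>0$ for all $\Delta\geq0$.
\item For $\rho\geq3$, the two lines cross only at $\Delta_{\times}=\frac{2M(\rho-2)+1}{M(2\rho-5)+1}$. Its numerator exceeds its denominator by $M$, so $\Delta_{\times}>1$.
\end{itemize}
Hence the whole strict chain holds pointwise for every $\Delta\in(0,1]$, whatever the sign of $r-\rho+1$. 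That is a cleaner formulation of what the theorem actually asserts.
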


Finally, we establish the asymptotic form of the \emph{sphere-packing-like}
bound. Recall the $q^{2}$-ary entropy function
\[
H_{q^2}(x)=
\begin{cases}
0, & x=0,\\[1ex]
\begin{aligned}[t]
&x\log_{q^2}(q^{2}-1)-x\log_{q^2}x\\
&\quad-(1-x)\log_{q^2}(1-x),
\end{aligned} & 0<x\leq1-q^{-2},
\end{cases}
\]
and let
\begin{equation}\label{eq:vol}
V_{q^2}(N,a)=\sum_{i=0}^{a}\binom{N}{i}(q^{2}-1)^{i}
\end{equation}
denote the volume of a $q^{2}$-ary Hamming ball of radius $a$. By
\cite[Lemma~2.10.3]{huffman2010fundamentals},
\begin{equation}\label{eq:entropy}
\lim_{N\to\infty}\frac{1}{N}\log_{q^2}V_{q^2}\bigl(N,\lfloor xN\rfloor\bigr)
=H_{q^2}(x),\qquad 0\leq x\leq1-q^{-2}.
\end{equation}
From \eqref{eq:qSphere}, for every admissible integer $\tau$,
\[
\kappa\leq n-2\tau(\rho-1)-2\log_{q^2}
V_{q^2}\left(n-\tau M,\left\lfloor\frac{\delta-1}{2}\right\rfloor\right).
\]
If $\tau=O(1)$, then dividing by $n$ and applying \eqref{eq:entropy} gives
\begin{equation}\label{eq:aSPhigh}
\mathcal{R}\leq1-2H_{q^2}\left(\frac{\Delta}{2}\right)+o(1);
\end{equation}
that is, when $\tau$ remains bounded, the locality-dependent term
$2\tau(\rho-1)/n$ vanishes and the bound reduces to the standard asymptotic
quantum sphere-packing bound.

To capture the asymptotic effect of locality, consider a sequence of admissible
integers $\{\tau_n\}$ with $\lim_{n\to\infty}\tau_n/n=\lambda$, where
$0\leq\lambda<1/M$, and set $N_n=n-\tau_nM$. Then $N_n/n\to1-\lambda M$ and
$\frac{1}{N_n}\lfloor\frac{\delta-1}{2}\rfloor\to\frac{\Delta}{2(1-\lambda M)}$.
Hence, provided that
\begin{equation}\label{eq:entdomain}
0\leq\frac{\Delta}{2(1-\lambda M)}\leq1-q^{-2},
\end{equation}
the estimate \eqref{eq:entropy} gives
\[
\mathcal{R}\leq1-2\left[\lambda(\rho-1)+(1-\lambda M)
H_{q^2}\left(\frac{\Delta}{2(1-\lambda M)}\right)\right],
\]
and consequently
\begin{equation}\label{eq:aSPgeneral}
\mathcal{R}\leq1-2\max_{\lambda\in\Lambda_\Delta}
\left\{\lambda(\rho-1)+(1-\lambda M)
H_{q^2}\left(\frac{\Delta}{2(1-\lambda M)}\right)\right\},
\end{equation}
where $\Lambda_\Delta$ is the set of asymptotically admissible shortening ratios,
i.e.\ those $\lambda\in[0,1/M)$ that satisfy \eqref{eq:entdomain} and are
compatible with the range \eqref{eq:qtau}. Define
\[
g(\lambda)=\lambda(\rho-1)+(1-\lambda M)
H_{q^2}\left(\frac{\Delta}{2(1-\lambda M)}\right).
\]
For $\Delta>0$ and with the entropy argument in the interior of its domain, put
$N(\lambda)=1-\lambda M$ and $C=\Delta/2$. Using the identity
\[
\frac{d}{dN}\left[N H_{q^2}\left(\frac{C}{N}\right)\right]
=-\log_{q^2}\left(1-\frac{C}{N}\right),
\]
we obtain
\begin{equation}\label{eq:gprime}
g'(\lambda)=(\rho-1)+M\log_{q^2}
\left(1-\frac{\Delta}{2(1-\lambda M)}\right),
\end{equation}
and furthermore
\[
g''(\lambda)=-\frac{M^{2}\Delta}
{2\ln(q^{2})(1-\lambda M)^{2}
\left(1-\dfrac{\Delta}{2(1-\lambda M)}\right)}<0 .
\]
Thus $g$ is strictly concave for $\Delta>0$ in the interior of the entropy
domain, and any interior stationary point is its unique maximizer. Assume
henceforth that $\rho>1$ and define $\beta=q^{-2(\rho-1)/M}$. Setting
$g'(\lambda)=0$ in \eqref{eq:gprime} yields the stationary point
\[
\lambda^{*}=\frac{1}{M}\left(1-\frac{\Delta}{2(1-\beta)}\right),
\]
and $\lambda^{*}>0$ is equivalent to $\Delta<2(1-\beta)$. Accordingly, define
the critical relative distance
\begin{equation}\label{eq:deltacrit}
\Delta_{\mathrm{crit}}=2(1-\beta)
=2\left[1-q^{-\frac{2(\rho-1)}{r+\rho-1}}\right].
\end{equation}
If $\Delta\geq\Delta_{\mathrm{crit}}$, then $g'(0)\leq0$; since $g$ is concave,
its maximum is attained at $\lambda=0$ and \eqref{eq:aSPhigh} follows.

Now let $0<\Delta<\Delta_{\mathrm{crit}}$. At $\lambda=\lambda^{*}$ we have
$1-\lambda^{*}M=\frac{\Delta}{2(1-\beta)}$ and
$\frac{\Delta}{2(1-\lambda^{*}M)}=1-\beta$, so that
\[
g(\lambda^{*})=\frac{\rho-1}{M}\left(1-\frac{\Delta}{2(1-\beta)}\right)
+\frac{\Delta}{2(1-\beta)}H_{q^2}(1-\beta).
\]
Using $\log_{q^2}\beta=-\frac{\rho-1}{M}$ together with
$H_{q^2}(1-\beta)=(1-\beta)\log_{q^2}(q^{2}-1)-(1-\beta)\log_{q^2}(1-\beta)
-\beta\log_{q^2}\beta$, this simplifies to
\[
g(\lambda^{*})=\frac{\rho-1}{M}
+\frac{\Delta}{2}\log_{q^2}\left(\frac{\beta(q^{2}-1)}{1-\beta}\right).
\]
Substituting into \eqref{eq:aSPgeneral} gives
\[
\mathcal{R}\leq1-\frac{2(\rho-1)}{M}
-\Delta\log_{q^2}\left(\frac{\beta(q^{2}-1)}{1-\beta}\right).
\]
Since $1-\frac{2(\rho-1)}{M}=\frac{r-\rho+1}{r+\rho-1}$ and
$\frac{\beta}{1-\beta}=\bigl(q^{2(\rho-1)/(r+\rho-1)}-1\bigr)^{-1}$, we arrive
at the asymptotic \emph{sphere-packing-like} bound
\begin{equation}\label{eq:aSPfinal}
\mathcal{R}\leq\frac{r-\rho+1}{r+\rho-1}
-\Delta\log_{q^2}\left(\frac{q^{2}-1}
{q^{\frac{2(\rho-1)}{r+\rho-1}}-1}\right),
\quad 0<\Delta<\Delta_{\mathrm{crit}}.
\end{equation}
For $\Delta=0$ the function $g$ is linear, so the strict concavity argument does
not apply; however, \eqref{eq:aSPfinal} extends continuously to $\Delta=0$.
Recalling that the entropy function requires $\Delta/2\leq1-q^{-2}$, the two
regimes are summarized as
\begin{equation}\label{eq:aSPpiecewise}
\mathcal{R}\leq
\begin{cases}
\begin{aligned}[t]
&\frac{r-\rho+1}{r+\rho-1}-\Delta\log_{q^2}\Biggl(\\
&\quad\frac{q^{2}-1}{q^{\frac{2(\rho-1)}{r+\rho-1}}-1}\Biggr),
\end{aligned} & 0\leq\Delta<\Delta_{\mathrm{crit}},\\[2.5ex]
1-2H_{q^2}\left(\dfrac{\Delta}{2}\right)+o(1),
& \Delta_{\mathrm{crit}}\leq\Delta\leq2(1-q^{-2}),
\end{cases}
\end{equation}
with $\Delta_{\mathrm{crit}}$ as in \eqref{eq:deltacrit}.

The numerical comparison in Fig.~\ref{fig:asymptotic} confirms the ordering of
Theorem~\ref{th:hierarchy} and shows that, for certain parameter choices, the
bound \eqref{eq:aSPpiecewise} provides the tightest asymptotic constraint among
the bounds considered.

\begin{figure}[!t]
\centering
\includegraphics[width=3.3in]{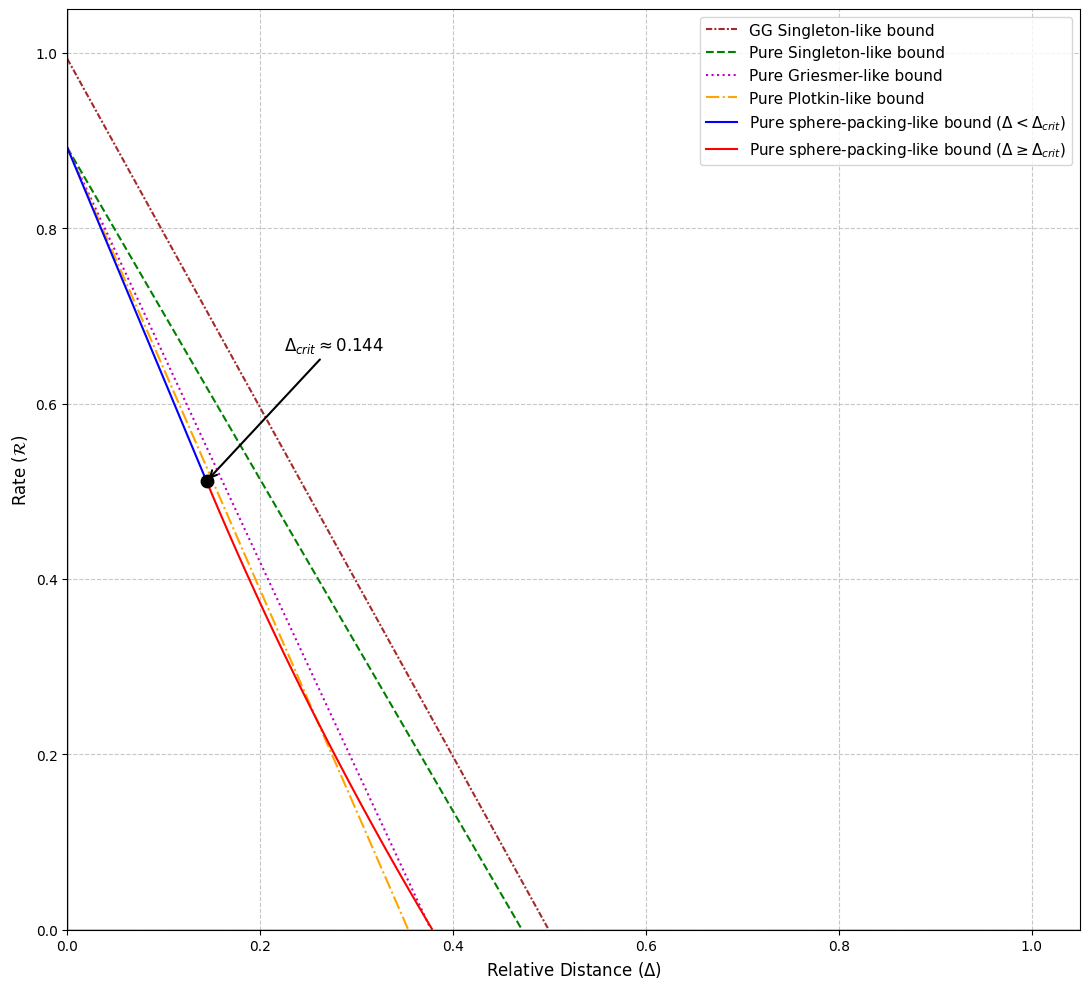}
\caption{Comparison of the asymptotic bounds and their relative tightness for an
$(r,\rho)=(280,17)$-qLRC over $\mathbb{F}_2$ with $t=2$.}
\label{fig:asymptotic}
\end{figure}

\section{Conclusion}\label{sec:conclusion}

We derived alphabet-dependent \emph{Griesmer-like},
\emph{sphere-packing-like} and \emph{Plotkin-like} upper bounds for pure
$(r,\rho)$-qLRCs obtained from Hermitian dual-containing classical
$(r,\rho)$-LRCs via the CSS construction. We showed that, asymptotically and for
$r\geq3$, the \emph{Plotkin-like} bound \eqref{eq:aPlotkin} is strictly tighter
than the \emph{Griesmer-like} bound \eqref{eq:aGriesmer}, which is in turn
strictly tighter than the existing \emph{Singleton-like}
\eqref{eq:aSingleton} and \emph{GG Singleton-like} \eqref{eq:aGG} bounds. A
numerical comparison of the \emph{sphere-packing-like} bound
\eqref{eq:aSPpiecewise} against this hierarchy over the
$(\mathcal{R},\Delta)$ plane is illustrated in Fig.~\ref{fig:asymptotic}.
Constructions attaining the bounds \eqref{eq:qGriesmer}, \eqref{eq:qSphere} and
\eqref{eq:qPlotkin} are left to the extended version of this work.

\bibliographystyle{IEEEtran}
\bibliography{references}

\end{document}